\renewcommand{\phi}{\varphi}
\newcommand{\hide}[1]{ }
\renewcommand{\mathbf}{\bm}
\theoremstyle{plain}
\newtheorem{theorem}{Theorem}[section]
\newtheorem{lemma}[theorem]{Lemma}
\newtheorem{definition}{Definition}
\newtheorem*{remark}{Remark}
\newtheorem{observation}{Observation}
\renewcommand{\include}{\input}
\newcommand{\algmub}{\text{ALG}_{\text{MUB}}}
\newcommand{\algrev}{\text{ALG}}
\newcommand{\mub}{\text{MUB}}
\title{Hierarchical Clustering: a 0.585 Revenue Approximation}
\author{Noga Alon%
	\thanks{Princeton University and Tel Aviv University. 
Email: nalon@math.princeton.edu. Research supported in part by
NSF grant DMS-1855464, 
BSF grant 2018267
and the Simons Foundation.
}
	\and
	Yossi Azar%
	\thanks{School of Computer Science, Tel-Aviv University. Email: azar@tau.ac.il. Research supported in part by the Israel Science Foundation (grant No. 1506/16)}
	\and
	Danny Vainstein%
	\thanks{School of Computer Science, Tel-Aviv University. Email: dannyvainstein@gmail.com}
}
\begin{document}

\maketitle

\begin{abstract}
Hierarchical Clustering trees have been widely accepted as a useful form 
of clustering data, resulting in a prevalence of adopting fields
including phylogenetics, image analysis, bioinformatics and more. 
Recently, Dasgupta (STOC 16') initiated the analysis of 
these types of algorithms through the lenses of approximation. 
Later, the dual problem was considered by Moseley and Wang (NIPS 17') dubbing it the Revenue goal function. 
In this problem, given a nonnegative weight $w_{ij}$ for each pair 
$i,j \in [n]=\{1,2, \ldots ,n\}$, the objective is to find a tree $T$ whose
set of leaves is $[n]$ that maximizes the function 
$\sum_{i<j \in [n]} w_{ij} (n -|T_{ij}|)$, where $|T_{ij}|$  is the number
of leaves in the subtree rooted at the least common ancestor of $i$ and $j$.

In our work we consider the revenue goal function and prove the following 
results. First, we prove the existence of a bisection 
(i.e., a tree of depth $2$ in which the root has two children, each being
a parent of $n/2$ leaves) 
which approximates the general optimal tree solution up to a factor of 
$\frac{1}{2}$ (which is tight). Second, we apply 
this result in order to prove a $\frac{2}{3}p$ approximation for the general revenue problem, where $p$ is defined as the approximation ratio of the \textsc{Max-Uncut Bisection} problem. 
Since $p$ is known to be at least $0.8776$ \cite{Better_Balance_by_Being_Biased_A_08776_Approximation_for_Max_Bisection} \cite{An_improved_semidefinite_programming_hierarchies_rounding_approximation_algorithm_for_maximum_graph_bisection_problems}, 
we get a $0.585$ approximation algorithm for the revenue problem. This 
improves a sequence of earlier results which culminated in an
$0.4246$-approximation guarantee \cite{Bisect_and_Conquer:_Hierarchical_Clustering_via_Max-Uncut_Bisection}.
\end{abstract}




\section{Introduction}

The notion of Hierarchical Clustering (HC) trees has been introduced and 
subsequently studied for several decades. The notion was first 
considered due to its applications to the realm of phylogenetics \cite{Numerical_taxonomy} \cite{A_model_for_taxonomy}. Here, given genomic similarities between species our goal is to hierarchically cluster the species in a way that captures 
fine-grained relations between different species. Since then, the notion of 
HC trees has expanded to many additional fields. See
\cite{A_Survey_of_Clustering_Data_Mining_Techniques}
for a survey on the subject.

Typically, schemes for generating HC trees fall into one of two categories: Agglomerative algorithms (i.e., bottom-up) and Divisive algorithms (i.e., top-down). 
Agglomerative algorithms initially start with a partition in which 
each data point forms its own set. The algorithm then proceeds to recursively merge different sets, terminating once all data points are contained in the same set. Notably, the well-studied average linkage algorithm 
is an example of such a procedure in which in each step 
two sets maximizing the average induced weight are merged. 
On the other hand, divisive algorithms start with a single set 
containing all the data points, and then proceed to recursively 
split sets, terminating once each data point remains alone in its set.

Recently, Dasgupta \cite{a_cost_function_for_similarity-based_hierarchical_clustering} 
formally defined the notion of a "good" HC tree. With this definition 
he elegantly bridged the gap between HC trees and the field of 
approximation algorithms by defining a minimization cost function (see related work for an extended discussion on this goal function). Thereafter, Moseley and Wang \cite{Approximation_Bounds_for_Hierarchical_Clustering:_Average_Linkage} considered the complementary 
maximization variant of this problem - namely, the revenue goal function.

Both problems considered the following model. Assume we are given a set of $n$ data points, $V$ with some notion of similarity between them. The similarity is formally captured through similarity-edges between 
any two data points, represented by  a weighted graph
$G = (V,E,w)$ with $w:E \rightarrow \mathbb{R}^+$. Alternatively, 
$G$ may be viewed as a complete graph in which $w(e) = 0$ 
for any $e \not \in E$. Our goal is then to construct an HC tree, $T$, such that its leaves are in a 1-1 correspondence with our data points, $V$. 
Note that given such a tree, every internal node represents 
a set of data points, which are the leaves of its subtree,
and a partition of this set given by the sets of leaves of the
subtrees rooted at the children of the node.

Intuitively speaking, since higher weighted edges correspond 
to more similar data points, it is desirable to split 
the endpoints of such edges low in a good
HC tree. 
Formally, Moseley and Wang \cite{Approximation_Bounds_for_Hierarchical_Clustering:_Average_Linkage} defined the revenue problem as,

\begin{equation*}
\textstyle\max_{T} \, R_G(T) = \textstyle\max_{T} \, \sum_{e=\{ij\} \in E} w_{ij} (n-|T_{ij}|),    
\end{equation*}
where $T$ is an HC tree, $T_{ij}$ denotes the subtree rooted at the least-common-ancestor (LCA) of data points $i$ and $j$ and $|T_{ij}|$ denotes the amount of data points in $T_{ij}$. 

In \cite{Approximation_Bounds_for_Hierarchical_Clustering:_Average_Linkage}, Moseley and Wang considered several algorithms. Notably, they considered the random algorithm, that simply splits data points randomly at every cut (henceforth denoted by $RAND$), and the average-linkage algorithm. They showed that both yield an approximation factor of 1/3. Subsequently, Charikar et al. \cite{Hierarchical_Clustering_better_than_Average_Linkage} showed that one can beat average-linkage through the use of semi-definite programming, improving the bound to 0.3364. Recently, Ahmadian et al. 
\cite{Bisect_and_Conquer:_Hierarchical_Clustering_via_Max-Uncut_Bisection} managed to leverage the \textsc{Max-Uncut Bisection} ($\mub$) problem in order to prove a 0.4246 approximation. In our paper we improve upon this result and show an improved approximation of 0.585. \\


\noindent \textbf{Our contributions.} We consider the revenue goal function and prove the following results.
\begin{itemize}
    \item We show that for any revenue instance,
there exists a bisection, $X$, that is, a tree of depth $2$ in which the root
has two children, each being the parent of $n/2$ leaves,
such that $R(X) \geq \frac{1}{2}OPT$, where $OPT$ denotes the revenue gained by the optimal tree (see Theorem \ref{theorem.0.5_bisection}). In order to show such existence we make use of two random processes: 
we randomly fix the order of the leaves in the optimal tree 
in an appropriate way,
and then randomly generate our bisection, $X$. 
We emphasize the fact that even though $OPT$ makes use of an arbitrarily deep tree, it is enough to consider a single cut in order to gain half the revenue. 

    \item Using our result regarding the existence of a large revenue generating bisection, we prove a 0.585 approximation for the revenue problem. We note that in fact we show a $\frac{2}{3}p$ approximation where $p=0.8776$ is the best known approximation for the $\mub$ problem.

\end{itemize}

\begin{remark}
\label{remark.our_algorithm_compared_to_ahmadin}
The algorithm we consider is that which solves the $\mub$ problem
for the first cut and then proceeds using the random algorithm. In
\cite{Bisect_and_Conquer:_Hierarchical_Clustering_via_Max-Uncut_Bisection},
Ahmadian et al. considered this algorithm coupled with the random
algorithm - they used both algorithms simultaneously while taking
the maximal revenue of the two. They showed that this results in a
0.4246 approximation with respect to the gained revenue. Somewhat 
surprisingly we show that the former algorithm ($\mub$ and then random) on its own is enough to yield an approximation of 0.585. 
\end{remark}

\noindent \textbf{Techniques.} Our first result makes use of a new 
upper bound on the optimal solution, which may be of independent interest. 
Specifically, given an optimal solution, we embed its leaves on a line such that its root is above the line and we have no resulting crossing edges. This clearly yields an ordering of the leaves (see Figure \ref{figure.example_of_tree_embedding}). We then consider the distance between any two data points, $i$ and $j$ within this ordering (simply the difference in rank) and observe that this is in fact a lower bound on $|T_{ij}|$. This in turn yields an upper bound on the optimal solution.

Next, we make use of this bound by randomly generating a bisection that 
gains revenue that is "large" with respect to the bound and 
by showing that in expectation this bound is "far" from the 
optimal solution. Both "large" and "far" will be formally defined later on. \\


\noindent \textbf{Related work.} Dasgupta \cite{a_cost_function_for_similarity-based_hierarchical_clustering} kicked off the line of work considering HC trees within the realm of approximation algorithms. In his paper, he considered similarity-edges and defined the cost of an HC tree as,
\begin{equation*}
\textstyle\min_T \, C_G(T) = \textstyle\min_T \, \sum_{e=\{ij\} \in E} w_{ij} |T_{ij}|.    
\end{equation*}

Note that the revenue goal function is in fact complementary to that of 
Dasgupta's function (that is, 
the optimal solution is the same for both, albeit with different goal function values).

In \cite{a_cost_function_for_similarity-based_hierarchical_clustering}, 
many general properties pertaining to this goal function were discussed. 
Notably, it was shown that this goal function is intuitive in that 
(1) on complete graphs (with no structure) all HC trees yield the 
same cost, (2) on disconnected graphs, optimal HC trees begin 
by splitting disconnected components and (3) the goal function is
modular. He further presented an $O(\log^{1.5} (n))$ approximation 
algorithm via recursive sparsest cut. Later, both Charikar and Chatziafratis
\cite{Approximate_Hierarchical_Clustering_via_Sparsest_Cut_and_Spreading_Metrics} and Cohen-Addad et al.  \cite{Hierarchical_Clustering:_Objective_Functions_and_Algorithms} showed that this algorithm 
is in fact an $O(\sqrt{\log n})$ approximation. 
In the hardness domain, Dasgupta \cite{a_cost_function_for_similarity-based_hierarchical_clustering} showed that the problem is NP-hard via 
a reduction to a variant of the NAE-SAT problem. 
This was later improved by Charikar and Chatziafratis \cite{Approximate_Hierarchical_Clustering_via_Sparsest_Cut_and_Spreading_Metrics}, 
showing that in fact no constant approximation exists (assuming the Small Set Expansion hypothesis). Cohen-Addad et al. \cite{Hierarchical_Clustering_Beyond_the_Worst-Case} 
managed to overcome the latter worst-case specific result by 
considering average case inputs defined  by a stochastic block model 
and its hierarchical extension. Here, they managed to establish
an $O(1)$ approximation.

Following Dasgupta's work, Cohen-Addad et al. \cite{Hierarchical_Clustering:_Objective_Functions_and_Algorithms} considered the case of dissimilarity-edges. In this case, Dasgupta's cost function is now translated to a maximization problem. Given an HC tree, $T$, we now denote its gained value as its gained dissimilarity, $D_G(T)$. In this setting, both the random algorithm ($RAND$) and the average-linkage algorithm yield dissimilarity values of $2/3$ of the optimal solution. Charikar et al. \cite{Hierarchical_Clustering_better_than_Average_Linkage} improved upon this by proving an approximation of 0.6671 which makes use of a more delicate multi-phase algorithm.

Several other extensions to the formerly defined HC goal functions were also considered. One such extension is that of structural constraints. Specifically, every constraint appears in the form of $"i,j|k"$ for data points $i$, $j$ and $k$. A constraint is then considered satisfied if $k \not \in T_{ij}$, for an HC tree, $T$. Aho et al. \cite{Inferring_a_tree_from_lowest_common_ancestors_with_an_application_to_the_optimization_of_relational_expressions} considered this problem in the phylogenetic realm where this notion gives rise to the problem of constructing a phylogenetic tree that satisfies a set of lineage constraints on common
ancestors. This notion has more recently been investigated 
in the domain of HC, by Chatziafratis et al.
\cite{Hierarchical_Clustering_with_Structural_Constraints}. In
their paper they extended Dasgupta's goal function to include
structural constraints and showed an $O(k \sqrt{\log n})$
approximation where $k$ is the number of constraints. For
additional works in the realm of HC trees see \cite{Clustering_with_interactive_feedback}, \cite{Local_algorithms_for_interactive_clustering}, \cite{Interactive_bayesian_hierarchical_clustering}, \cite{Hierarchical_Clustering_for_Euclidean_Data}.

\section{Notation and Preliminaries}

We introduce several definitions that will aid us throughout the following sections.

\begin{definition}
Given a revenue instance $G = (V,E,w)$, we denote the optimal revenue tree by $OPT$.
\end{definition}

We shall abuse notation and refer both to the optimal revenue tree and 
to its generated revenue as $OPT$ (it will be clear from context which of these definitions we will be referring to). 

\begin{definition}
Given an HC tree $T$, we denote by $R(T)$ the revenue it yields and for any similarity edge $e$, we denote by $R_T(e)$ the revenue gained by the edge $e$ with respect to the HC tree, $T$.
\end{definition}

The following definition will be useful when considering the revenue generated with respect to some similarity edge.

\begin{definition}
Given a revenue instance $G = (V,E,w)$ and a similarity edge $e = \{i,j\}$ we denote $T_e = T_{ij}$.
\end{definition}

Note that therefore, for any HC tree $T$, $R_T(e) = w_e(n -
|T_e|)$, where $|T_e|$ is the number of leaves in the tree
$T_e$. Further note that with these 
definitions we may assume w.l.o.g. that any optimal tree is binary. 

Finally, to ease notation we shall refer to the data points of a given revenue instance $G = (V,E,w)$, as $V = \{1,2,\ldots, n\}$.


\section{Existence of a High Revenue Bisection}

\begin{theorem}
\label{theorem.0.5_bisection}
For any revenue instance and corresponding optimal solution, $OPT$, 
there exists a bisection, $X$, satisfying
\[
R(X) \geq \frac{1}{2}OPT,
\]
and this is asymptotically tight.
\end{theorem}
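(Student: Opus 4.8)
The plan is to establish the lower bound by a two-stage randomized construction, and the tightness by a disjoint union of small cliques.

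\textbf{Step 1 (random layout of $OPT$).} Assume $n$ is even and, as already noted, that $OPT$ is binary. Independently at every internal node of $OPT$, flip a fair coin to order its two children, and then read off the leaves left to right; this yields a random ordering of $\{1,\dots,n\}$ in which the leaf set of every subtree is a block of consecutive ranks. For an edge $e=\{i,j\}$ let $\mathrm{span}(e)$ be the difference of the ranks of $i$ and $j$, let $v=\mathrm{LCA}(i,j)$, and let $a,b$ be the leaf-counts of the two child subtrees of $v$, so $a+b=|T_e|$ and $i,j$ lie on opposite sides. Conditioning on the positions of $i$ and $j$ inside their own child blocks and averaging only over the coin at $v$ (which places the $a$-block first or the $b$-block first with probability $\tfrac12$ each), everything but $a$ and $b$ cancels and
\begin{equation*}
\E[\mathrm{span}(e)]=\tfrac12(a+b)=\tfrac12|T_e|,
\end{equation*}
with no dependence on the unknown laws of those interior positions. (This is consistent with, but stronger than, the pointwise bound $\mathrm{span}(e)\le|T_e|-1$ coming from block-contiguity, i.e.\ the fact that $|T_{ij}|$ is at least the rank-distance of $i$ and $j$.)

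\textbf{Step 2 (a sliding bisection).} Put the leaves, in the order from Step 1, around a cycle of length $n$, pick $s$ uniformly in $\{1,\dots,n\}$, and let $X$ be the bisection with parts $\{s,s+1,\dots,s+\tfrac n2-1\}$ (mod $n$) and its complement. For any bisection, $R_X(e)=\tfrac n2 w_e$ if $e$ is uncut and $0$ otherwise, so $R(X)=\tfrac n2\sum_{e\text{ uncut}}w_e$. The cut $X$ is determined by two antipodal ``cut gaps'' on the cycle; $e$ is cut iff an odd number of them lies in the shorter of the two arcs between $i$ and $j$, and since two antipodal gaps cannot both lie in an arc of at most $n/2$ gaps, this probability is exactly $\tfrac2n\min(\mathrm{span}(e),\,n-\mathrm{span}(e))\le\tfrac2n\mathrm{span}(e)$. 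Taking expectation over the coins of Step 1 and over $s$,
\begin{align*}
\E[R(X)] &\ge \frac n2\sum_e w_e\Bigl(1-\frac{2\,\E[\mathrm{span}(e)]}{n}\Bigr) = \frac n2\sum_e w_e-\sum_e w_e\frac{|T_e|}{2}\\
&= \frac12\sum_e w_e\bigl(n-|T_e|\bigr) = \frac12\,OPT .
\end{align*}
Hence some realization of $X$ --- which is a genuine depth-$2$ bisection --- satisfies $R(X)\ge\tfrac12\,OPT$.

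\textbf{Step 3 (asymptotic tightness).} Take $G$ to be a disjoint union of $n/k$ unit-weight cliques $K_k$ with $k=k(n)\to\infty$, $k=o(n)$, and $2k\mid n$; write $W=\sum_e w_e$. The tree that first isolates the $n/k$ cliques and then builds each clique's subtree arbitrarily places the least common ancestor of every edge inside a $k$-leaf subtree, so $OPT\ge(n-k)W$; on the other hand every bisection earns $\tfrac n2\sum_{e\text{ uncut}}w_e\le\tfrac n2 W$, with equality attained by placing $n/(2k)$ cliques on each side. Thus no bisection beats $\tfrac{n}{2(n-k)}\,OPT=(\tfrac12+o(1))\,OPT$, so the constant $\tfrac12$ cannot be improved in the limit.

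\textbf{Main obstacle.} The delicate point is the choice of layout in Step 1: a uniformly random permutation of the leaves would make $\E[\mathrm{span}(e)]$ independent of $|T_e|$ and yield only $\E[R(X)]\approx\tfrac14\,OPT$, so the layout must be coupled to $OPT$ tightly enough that $\E[\mathrm{span}(e)]$ depends on $e$ solely through $|T_e|$. Proving the identity $\E[\mathrm{span}(e)]=\tfrac12|T_e|$ --- together with checking that the sliding bisection loses nothing beyond the factor $2\min(\mathrm{span},n-\mathrm{span})/n$ --- is where the real work lies; the rest is linearity of expectation and bookkeeping.
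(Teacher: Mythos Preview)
Your argument is correct and follows essentially the same route as the paper. The paper uses the identical random layout of $OPT$ and proves the same identity $\E[y_e^\pi]=\tfrac12|T_e|$; its sliding bisection is phrased on the line (choosing $x$ uniformly in $\{1,\dots,n/2\}$) rather than on the cycle, but the two constructions are the same distribution over bisections, and the paper first fixes a good ordering $\pi^*$ before randomizing the cut, whereas you average over both sources of randomness at once---a harmless streamlining via linearity of expectation. For tightness the paper uses the simpler perfect matching (your construction with $k=2$), which already gives $\tfrac{n}{2(n-2)}\to\tfrac12$; your requirement $k\to\infty$ is unnecessary, though not wrong.
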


\noindent Before we prove the theorem, we introduce some notation. 
To simplify the presentation we assume  from now on that the number
$n$ of leaves is even. \\

\noindent \textbf{Defining a tree ordering:} Given an HC tree $T$, we may fix its leaves in several different orders. Each order is produced by representing $T$  as a planar graph where its leaves are all embedded on a line, the root is above the line and all edges of the tree are straight lines going down from each parent to its children with no crossing edges. Denote each such ordering by the function $\pi(T) : V \rightarrow [n]$.\\

Henceforth we will consider $\pi$ with respect to the optimal solution $OPT$ and denote $\pi = \pi(OPT)$. Recall that we may assume w.l.o.g. that $OPT$ is a binary tree. Next we define a distribution over all feasible orderings, $\pi$.

\begin{definition}
Let $P_{\pi}$ denote the distribution generated by choosing uniformly at random a subset of internal nodes from $OPT$ and swapping the placements of the left and right subtrees of each of these chosen nodes.
\end{definition}


Finally, given an ordering $\pi$, and an edge $e = \{i,j\}$, let $y_e^\pi = |\pi(i) - \pi(j)|$, denote the distance between leaves $i$ and $j$ in the fixed tree, $OPT$. For a pictorial example, see Figure \ref{figure.example_of_tree_embedding}.

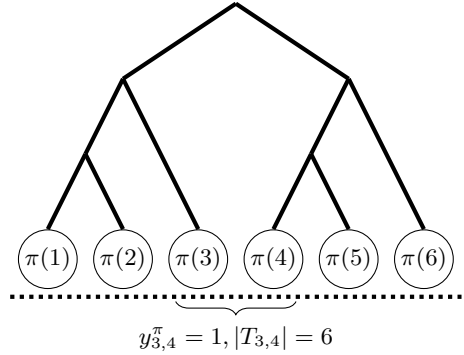
\begin{figure}[H]
\centering
\begin{tikzpicture}

\draw [ultra thick] (0,0) -- (0.5, 1);
\node [circle,draw=black,inner sep=1pt] (d_1) at (0,0-.4) {\footnotesize $\pi(1)$};
\draw [ultra thick] (1,0) -- (0.5, 1);
\node [circle,draw=black,inner sep=1pt] (d_2) at (1,0-.4) {\footnotesize $\pi(2)$};
\draw [ultra thick] (0.5,1) -- (1,2);
\draw [ultra thick] (2,0) -- (1,2);
\node [circle,draw=black,inner sep=1pt] (d_3) at (2,0-.4) {\footnotesize $\pi(3)$};
\draw [ultra thick] (1,2) -- (2.5,3);

\draw [ultra thick] (3,0) -- (3.5,1);
\node [circle,draw=black,inner sep=1pt] (d_4) at (3,0-.4) {\footnotesize $\pi(4)$};
\draw [ultra thick] (4,0) -- (3.5,1);
\node [circle,draw=black,inner sep=1pt] (d_5) at (4,0-.4) {\footnotesize $\pi(5)$};
\draw [ultra thick] (3.5,1) -- (4,2);
\draw [ultra thick] (5,0) -- (4,2);
\node [circle,draw=black,inner sep=1pt] (d_6) at (5,0-.4) {\footnotesize $\pi(6)$};
\draw [ultra thick] (4,2) -- (2.5,3);

\draw [ultra thick, dotted] (-.5,-.9) -- (5.5,-.9);

\draw [decorate,decoration={brace,amplitude=5pt,mirror,raise=4ex}]
  (2-.3,-.3) -- (3+.3,-.3) node[midway,yshift=-3em]{\footnotesize $y_{3,4}^\pi=1, |T_{3,4}|=6$};

\end{tikzpicture}
\caption{Embedding an HC tree with order $\pi$.}
\label{figure.example_of_tree_embedding}
\end{figure}

\noindent In order to prove Theorem \ref{theorem.0.5_bisection} we first upper bound the revenue gained by $OPT$.

\begin{observation}
$\forall \pi, \forall e\in E: y_e^\pi \leq |T_e|$, where both $\pi$ and $T_e$ are defined with respect to $OPT$.
\end{observation}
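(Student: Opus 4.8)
I would prove the slightly stronger statement $y_e^\pi + 1 \le |T_e|$, from which $y_e^\pi \le |T_e|$ is immediate. The plan is to establish the following structural fact about any crossing-free embedding of $OPT$ (hence about any ordering $\pi$): for every node $u$ of $OPT$, the leaves of the subtree rooted at $u$ occupy a contiguous block of positions under $\pi$. Given this, the observation follows quickly. Let $v$ be the least common ancestor of $i$ and $j$, so that $T_e = T_v$; its $|T_e|$ leaves sit at a contiguous set of positions, and since both $i$ and $j$ are leaves of $T_v$, their positions $\pi(i)$ and $\pi(j)$ lie in this block. Hence the $y_e^\pi + 1$ integers from $\min(\pi(i),\pi(j))$ to $\max(\pi(i),\pi(j))$ are all among these positions, giving $y_e^\pi + 1 \le |T_e|$.

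To prove the structural fact I would induct on the height of $u$. If $u$ is a leaf, its subtree occupies a single position, trivially a block. If $u$ is internal with children $u_1,\dots,u_c$ listed in the left-to-right order in which the embedding draws the edges from $u$ to them, the induction hypothesis gives that each subtree rooted at $u_\ell$ occupies a block $I_\ell$ of positions. The no-crossing property then forces $I_1,\dots,I_c$ to appear consecutively in this order with nothing between them: if some position of $I_m$ lay strictly between two positions of $I_\ell$, or the two blocks interleaved, one could exhibit two tree edges --- one within the subtree rooted at $u_\ell$ (or the edge from $u$ to $u_\ell$ itself) and one within the subtree rooted at $u_m$ (or the edge from $u$ to $u_m$) --- that are forced to cross, a contradiction. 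Therefore $I_1 \cup \dots \cup I_c$ is again a single block, of size $|T_u|$, which completes the induction.

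The step I expect to be the main obstacle is the one invoking the no-crossing property, i.e.\ justifying rigorously that the leaf-blocks of sibling subtrees cannot interleave; this is at bottom a small planarity argument (a Jordan-curve-type statement) and needs some care about what a crossing-free straight-line drawing with the root above the line actually provides. An alternative, essentially equivalent way to discharge it is the direct argument: were a leaf $k$ to have $\pi(k)$ strictly between $\pi(i)$ and $\pi(j)$ while not being a descendant of $v$, the two tree-paths from $v$ to $i$ and from $v$ to $j$, together with the segment of the line between $\pi(i)$ and $\pi(j)$, would bound a region with $k$ on its boundary and the root of $OPT$ outside it, so the root-to-$k$ path of $OPT$ would have to cross one of those two paths; since $k\notin T_v$ that path shares no edge with them, so the crossing is a genuine edge crossing, contradicting crossing-freeness. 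Either way the entire observation reduces to this single planarity lemma, after which the counting is immediate.
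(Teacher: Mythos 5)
Your proof is correct and fills in exactly the argument the paper leaves implicit: the Observation is stated without proof precisely because a crossing-free embedding forces the leaves of each subtree (in particular $T_e$) to occupy a contiguous block of positions, so two leaves of $T_e$ are at distance at most $|T_e|-1 \le |T_e|$. Your contiguity lemma by induction on height, with the planarity argument for why sibling blocks cannot interleave, is the standard and intended justification.
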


\noindent In itself, this will not result in a good enough bound on $OPT$. Therefore, we show that on average, $y_e^\pi$ is far from $T_e$.

\begin{lemma}
\label{lemma.y_is_far_from_real_value}
$E_{P_{\pi}} [y_e^\pi] = \frac{|T_e|}{2}$.
\end{lemma}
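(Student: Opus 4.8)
The plan is to fix an edge $e=\{i,j\}$, set $a=\mathrm{LCA}(i,j)$ in $OPT$ so that $T_e$ is the subtree rooted at $a$, and let $L,R$ be the two subtrees hanging off $a$ with $i$ a leaf of $L$ and $j$ a leaf of $R$. I would start from the trivial identity $y_e^\pi=|\pi(i)-\pi(j)|=1+\bigl|\{k:\pi(k)\text{ lies strictly between }\pi(i)\text{ and }\pi(j)\}\bigr|$, valid for every ordering. Since in any feasible ordering the leaves of a subtree occupy a contiguous block, and the block of $T_e$ contains both $i$ and $j$, every such $k$ lies in $T_e\setminus\{i,j\}$. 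So by linearity of expectation it suffices to show that each $k\in T_e\setminus\{i,j\}$ lies between $i$ and $j$ with probability exactly $\tfrac12$ under $P_\pi$; summing over the $|T_e|-2$ such $k$ then gives $\E_{P_\pi}[y_e^\pi]=1+\tfrac{|T_e|-2}{2}=\tfrac{|T_e|}{2}$.

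The key tool I would establish first is that, under $P_\pi$, the relative order of any two leaves $u,v$ is a deterministic function of the single swap bit $b_c$ at $c=\mathrm{LCA}(u,v)$: swaps at proper descendants of $c$ only rearrange leaves within $c$'s two child-blocks without interleaving them, swaps at proper ancestors move $c$'s block as a whole, and only the swap at $c$ exchanges its two child-blocks --- and $u,v$ sit in different child-blocks of $c$. Since $b_c$ is a fair coin, $\Pr[\pi(u)<\pi(v)]=\tfrac12$. Next, for $k\in T_e\setminus\{i,j\}$, say $k\in L$, the node $c:=\mathrm{LCA}(i,k)$ is a proper descendant of $a$ (both $i$ and $k$ lie in $L$), so $b_c\neq b_a$ and these two fair bits are independent. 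Then $b_a$ controls whether the $L$-block precedes the $R$-block (equivalently, whether $i$'s block precedes $j$'s) and $b_c$ controls whether $k$ precedes or follows $i$ within $L$, and a short case analysis shows $k$ falls strictly between $i$ and $j$ in exactly two of the four equally likely $(b_a,b_c)$-configurations. The case $k\in R$ is symmetric.

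To summarize, the steps in order are: (1) rewrite $y_e^\pi$ as $1$ plus the number of leaves strictly between $i$ and $j$, and note all of them lie in $T_e$; (2) prove the single-bit lemma for the order of two leaves; (3) for $k$ on $i$'s side, identify the two independent bits $b_a$ and $b_{\mathrm{LCA}(i,k)}$ and conclude the betweenness probability is $\tfrac12$; (4) finish by linearity of expectation. I expect step (3) to be the crux: the delicate point is to verify that $\mathrm{LCA}(i,k)\neq a$ whenever $k$ is on the same side of $a$ as $i$, so that the bit governing the $i$-versus-$k$ order is independent of the bit governing which of $L,R$ comes first --- it is precisely this independence that makes the probability come out to $\tfrac12$ rather than a quantity depending on the shapes of $L$ and $R$. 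Everything else is routine.
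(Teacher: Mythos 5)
Your proof is correct, but it takes a different route from the paper's. The paper conditions on which of the two child-blocks of $\mathrm{LCA}(i,j)$ comes first (a fair coin by construction of $P_\pi$) and then computes $E[y_e^\pi]$ directly from the fact that the expected rank of a leaf within its own block of $n_i$ (resp.\ $n_j$) leaves is $\frac{n_i+1}{2}$ (resp.\ $\frac{n_j+1}{2}$), yielding $\frac{1}{2}\bigl(n_i+\frac{n_j+1}{2}-\frac{n_i+1}{2}\bigr)+\frac{1}{2}\bigl(n_j+\frac{n_i+1}{2}-\frac{n_j+1}{2}\bigr)=\frac{|T_e|}{2}$. You instead decompose $y_e^\pi$ as $1$ plus a sum of betweenness indicators over the leaves of $T_e\setminus\{i,j\}$ and show each indicator is a fair coin; the crux, as you correctly identify, is that the bit at $\mathrm{LCA}(i,k)$ governing the $i$-versus-$k$ order is a proper descendant of, hence independent of, the bit at $\mathrm{LCA}(i,j)$ governing which side $j$ lies on --- and in fact your computation $\frac12 p+\frac12(1-p)=\frac12$ shows the betweenness probability would be $\frac12$ from this independence alone, even without knowing $p=\frac12$. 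Both arguments rest on the same two structural facts (contiguity of subtree blocks in any feasible embedding, and independence of the swap bits); the paper's version is shorter once the expected-rank fact is granted, while yours makes the independence structure fully explicit, avoids the ``simple induction'' the paper waves at, and additionally exhibits $y_e^\pi-1$ as a sum of Bernoulli$(\tfrac12)$ indicators, which is slightly more information than the expectation alone. Your proof is complete as outlined and would be a valid substitute.
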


\begin{proof}
Let $e = \{i,j\}$. It can be shown through simple induction that 
for any tree with $n$ leaves, and for any leaf, 
$k$, $E_{P_{\pi}} [\pi(k)] = \frac{n+1}{2}$.  (This also follows by
linearity of expectation from the simple fact that for every two
distinct leaves $i$ and $j$ the probability that $\pi(i)<\pi(j)$
is exactly $1/2$.)

Denote by $n_i$ and $n_j$ the number of leaves in the subtrees containing $i$ and $j$, each rooted at a separate child of $i$ and $j$'s least-common-ancestor. 
By the definition of $P_{\pi}$, the probability that $i$ (together with the $n_i$ leaves of its subtree) appears before $j$ (and the $n_j$ leaves in its subtree) is exactly $1/2$, resulting in:
\[
E_{P_{\pi}} [y_e^\pi] = (1/2)(n_i + \frac{n_j+1}{2} - \frac{n_i+1}{2}) + (1/2)(n_j + \frac{n_i+1}{2} - \frac{n_j+1}{2}) = \frac{|T_e|}{2}.
\]
\end{proof}

\noindent Let $Y_\pi = \sum_{e \in E} w_e \cdot y_e^\pi$. By 
linearity of expectation we get the following lemma.

\begin{lemma}
\label{lemma.small_weighted_ordering}
There exists an ordering of $OPT$, $\pi^*$, such that,
\[
Y_{\pi^*} \leq \sum_{e \in E} w_e \cdot \frac{|T_e|}{2}.
\]
\end{lemma}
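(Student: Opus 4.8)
The plan is to deduce this directly from Lemma~\ref{lemma.y_is_far_from_real_value} via a standard averaging (probabilistic method) argument over the distribution $P_\pi$. First I would write $Y_\pi = \sum_{e \in E} w_e \cdot y_e^\pi$ and take expectations with respect to $P_\pi$. Since each weight $w_e$ is a nonnegative constant and the sum is finite, linearity of expectation gives
\[
E_{P_\pi}[Y_\pi] = \sum_{e \in E} w_e \cdot E_{P_\pi}[y_e^\pi].
\]
Then I would substitute the value of $E_{P_\pi}[y_e^\pi]$ supplied by Lemma~\ref{lemma.y_is_far_from_real_value}, namely $E_{P_\pi}[y_e^\pi] = |T_e|/2$, to conclude that $E_{P_\pi}[Y_\pi] = \sum_{e \in E} w_e \cdot \frac{|T_e|}{2}$.

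The second and final step is the observation that a random variable cannot always lie strictly above its mean: since $Y_\pi$ is a random variable (over the random choice of $\pi$ drawn from $P_\pi$) whose expectation is $\sum_{e \in E} w_e \cdot \frac{|T_e|}{2}$, there must exist at least one ordering $\pi^*$ in the support of $P_\pi$ — that is, a genuine planar embedding ordering of the (binary) tree $OPT$ — for which $Y_{\pi^*} \leq E_{P_\pi}[Y_\pi] = \sum_{e \in E} w_e \cdot \frac{|T_e|}{2}$. This $\pi^*$ is the desired ordering, and since it lies in the support of $P_\pi$, it is indeed a valid ordering of $OPT$ as required.

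I do not anticipate any real obstacle here: the content is entirely in Lemma~\ref{lemma.y_is_far_from_real_value}, and this lemma is just the bookkeeping step that converts the per-edge expectation into a statement about the existence of a single good ordering. The only points to be slightly careful about are (i) that $P_\pi$ is supported on legitimate orderings $\pi(OPT)$, so the $\pi^*$ we extract is admissible, and (ii) that nonnegativity of the $w_e$ is not even needed for this step — linearity of expectation holds regardless — though it is of course used elsewhere. I would keep the write-up to two or three sentences.
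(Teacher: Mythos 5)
Your proposal is correct and is essentially identical to the paper's own proof: both apply linearity of expectation together with Lemma~\ref{lemma.y_is_far_from_real_value} to compute $E_{P_\pi}[Y_\pi] = \sum_{e \in E} w_e \cdot \frac{|T_e|}{2}$, and then invoke the averaging argument that some $\pi^*$ in the support must satisfy $Y_{\pi^*} \leq E_{P_\pi}[Y_\pi]$. Your added remarks about the support of $P_\pi$ consisting of legitimate orderings are a fine (if implicit in the paper) point of care.
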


\begin{proof}
By linearity of expectation and Lemma \ref{lemma.y_is_far_from_real_value},
\[
E_{P_{\pi}} [Y_\pi] = \sum_{e \in E} w_e \cdot E_{P_{\pi}}[y_e^\pi] = \sum_{e \in E} w_e \cdot \frac{|T_e|}{2}.
\]
Therefore, there exists an ordering as needed.
\end{proof}

\noindent Next we show that there exists a distribution over the
set of all bisections with high (to some degree) revenue with
respect to the revenue gained by considering $y_e^\pi$ rather than
$|T_e|$.

\begin{lemma}
\label{lemma.high_revenue_distribution_of_bisections}
Given any ordering of $OPT$, $\pi$, there 
exists a distribution, $P_X$, over all bisections, $X$, such that for any edge $e \in E$,
\[
E_{P_X} [R_X(e)] \geq  \frac{1}{2} w_e (n - 2y_e^\pi).
\]
\end{lemma}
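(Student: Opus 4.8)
\noindent The plan is to exhibit an explicit distribution $P_X$ built from a random cyclic rotation of the given ordering. Arrange the $n$ leaves on a cycle in the cyclic order induced by $\pi$, so that there are $n$ ``gaps'', one between each pair of cyclically consecutive leaves. Pick one gap uniformly at random, together with the unique gap at cyclic distance $n/2$ from it; deleting these two (antipodal) gaps breaks the cycle into two arcs, each containing exactly $n/2$ leaves. Let $X$ be the bisection whose two length-$n/2$ parts are the leaf sets of these two arcs. This defines $P_X$ (each unordered antipodal pair of gaps is selected with probability $2/n$).

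\noindent First I would reduce the claim to a single probability. Fix an edge $e=\{i,j\}$ and let $A,B$ be the parts of a bisection $X$. If $i$ and $j$ lie in the same part, their LCA is the corresponding child of the root, whose subtree has exactly $n/2$ leaves, so $R_X(e)=w_e(n-\tfrac n2)=\tfrac{w_e n}{2}$; if they lie in different parts, their LCA is the root and $R_X(e)=0$. Hence
\[
E_{P_X}[R_X(e)] \;=\; \tfrac{w_e n}{2}\cdot\Pr_{P_X}\big[\,i,j \text{ in the same part of } X\,\big],
\]
and it suffices to show $\Pr_{P_X}[\,i,j\text{ same part}\,]\ge 1-\tfrac{2y_e^\pi}{n}$.

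\noindent Next comes the probability computation. If $y_e^\pi\ge n/2$ the right-hand side above is $\le 0$ and the inequality is immediate since revenue is nonnegative, so assume $d:=y_e^\pi=|\pi(i)-\pi(j)|\le n/2$. On the cycle, the leaves $i$ and $j$ split the $n$ gaps into two arcs of $d$ and $n-d$ consecutive gaps, and $i,j$ land in different parts of $X$ exactly when precisely one of the two chosen antipodal gaps falls inside the arc $D$ of $d$ gaps. Because $d\le n/2$, the arc $D$ and its antipodal translate $D'$ (the image of $D$ under rotation by $n/2$) are disjoint; hence for a uniformly random gap $g$, the event ``exactly one of $g$ and its antipode lies in $D$'' is precisely the event $g\in D\cup D'$, which has probability $\tfrac{2d}{n}$. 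Therefore $\Pr_{P_X}[\,i,j\text{ different parts}\,]=\tfrac{2d}{n}$, i.e. $\Pr_{P_X}[\,i,j\text{ same part}\,]=1-\tfrac{2y_e^\pi}{n}$, and substituting into the display yields $E_{P_X}[R_X(e)]=\tfrac{w_e n}{2}\big(1-\tfrac{2y_e^\pi}{n}\big)=\tfrac12 w_e(n-2y_e^\pi)$, in fact with equality.

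\noindent I do not expect a serious obstacle: once the random-rotation distribution is identified, the rest is a one-line combinatorial count. The two points needing mild care are (i) realizing that a cyclic rotation — rather than a fixed split or a non-wrapping window on the line — is exactly what produces the clean identity $\Pr[\text{same part}]=1-2y_e^\pi/n$, and (ii) dealing with the regime $y_e^\pi\ge n/2$, where linear distance and cyclic distance diverge but the inequality is vacuous anyway.
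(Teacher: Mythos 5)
Your proposal is correct and is essentially the paper's own argument: your random antipodal-gap cut on the cycle induces exactly the same uniform distribution over the $n/2$ contiguous-window bisections $\{x,\dots,x+\tfrac n2-1\}$, $x\in\{1,\dots,\tfrac n2\}$, that the paper uses, and the key computation (cut probability $2y_e^\pi/n$, uncut revenue $w_e n/2$, with the case $y_e^\pi\ge n/2$ handled by nonnegativity) is identical. The only difference is presentational — the cyclic description makes the probability count slightly cleaner — so no further changes are needed.
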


\begin{proof}
Fix $\pi$. We relabel the leaves of $OPT$ such that $\pi(i) = i$ (this is 
simply to ease the notation). Next we define a distribution over all bisections, $X$. Consider the following random process: choose $x$ uniformly at random from $\{1, \ldots, \frac{n}{2}\}$. Then define $X$ to be the bisection, $\{x, x+1, \ldots, x+\frac{n}{2} - 1\}, \{1,2,\ldots, x-1, x+\frac{n}{2}, x+\frac{n}{2}+1, \ldots, n\}$.

Now consider some edge, $e \in E$. If $y_e^\pi \geq \frac{n}{2}$, 
then $n - 2y_e^\pi \leq 0$. $R_X(e)$ is always nonnegative 
(since it is defined as the revenue gained by $e$) and thus the 
assertion of the lemma holds in this case.

Otherwise, $y_e^\pi \leq \frac{n}{2}-1$ for $e = \{i,j\}$. In this case, the probability that $i$ and $j$ are cut by the bisection is 
exactly $ \frac{y_e^\pi}{n/2} = \frac{2y_e^\pi}{n}$. Furthermore, since $X$ is a bisection, any uncut edge yields a revenue of $n/2$. Overall,
\[
E_{P_X} [R_X(e)] \geq w_e(1 - \frac{2y_e^\pi}{n})\frac{n}{2},
\]
completing the proof of the lemma.
\end{proof}

Recall that by the definition of $R(X)$ it follows that 
$R(X) = \sum_{e \in E} R_X(e)$. Therefore, by combining lemmas \ref{lemma.small_weighted_ordering} and \ref{lemma.high_revenue_distribution_of_bisections}, we may sum over all edges and get the following observation.

\begin{observation}
\label{observation.expected_rev_gained_by_offline_bisection}
Given the ordering, $\pi^*$, guaranteed by Lemma \ref{lemma.small_weighted_ordering}, we get,
\[
E_{P_X}[R(X)] \geq \sum_{e \in E} \frac{1}{2} w_e (n - 2y_e^{\pi^*}).
\]
\end{observation}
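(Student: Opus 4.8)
The plan is to obtain the observation by a single application of linearity of expectation, combining the per-edge guarantee of Lemma \ref{lemma.high_revenue_distribution_of_bisections} with the particular ordering selected in Lemma \ref{lemma.small_weighted_ordering}. No new probabilistic machinery is needed; the entire content is to correctly stitch the two lemmas together over the edge set.

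First I would fix the ordering $\pi^*$ furnished by Lemma \ref{lemma.small_weighted_ordering} and feed it into Lemma \ref{lemma.high_revenue_distribution_of_bisections}. The crucial point to record here is that the latter lemma produces a \emph{single} distribution $P_X$ over bisections: in its proof $P_X$ is obtained by drawing one parameter $x$ uniformly from $\{1,\dots,\frac{n}{2}\}$ and then deterministically forming the corresponding bisection, and this one distribution simultaneously satisfies the per-edge bound $E_{P_X}[R_X(e)] \ge \frac{1}{2} w_e (n - 2 y_e^{\pi^*})$ for \emph{every} $e \in E$. Thus we are entitled to reuse the same $P_X$ across all edges.

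Next I would invoke the identity $R(X) = \sum_{e \in E} R_X(e)$, which holds for every realized bisection $X$ by the definition of the revenue function. Taking expectations over $P_X$ and applying linearity of expectation gives $E_{P_X}[R(X)] = \sum_{e \in E} E_{P_X}[R_X(e)]$. Substituting the per-edge lower bound from the previous step into each summand yields $E_{P_X}[R(X)] \ge \sum_{e \in E} \frac{1}{2} w_e (n - 2 y_e^{\pi^*})$, which is exactly the claimed inequality.

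The only genuinely delicate point, and the step I would guard most carefully, is the interchange of summation and expectation: it is valid precisely because $P_X$ is one distribution over bisections rather than a distribution chosen separately for each edge. Were the distribution allowed to vary with $e$, the sum of the individual per-edge guarantees would not correspond to the expected revenue of any single random bisection, and the observation would not follow. Since Lemma \ref{lemma.high_revenue_distribution_of_bisections} explicitly delivers a distribution that is uniform across edges, linearity applies cleanly and no further estimates are required.
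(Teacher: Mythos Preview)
Your proposal is correct and matches the paper's own justification essentially verbatim: the paper simply notes that $R(X)=\sum_{e\in E}R_X(e)$ and then sums the per-edge bound of Lemma~\ref{lemma.high_revenue_distribution_of_bisections} (applied with the ordering $\pi^*$ supplied by Lemma~\ref{lemma.small_weighted_ordering}) over all edges via linearity of expectation. Your additional remark that the same distribution $P_X$ is used for every edge is exactly the point that makes the linearity step legitimate, and the paper leaves this implicit.
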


\noindent We are now ready to prove Theorem \ref{theorem.0.5_bisection}.\\

\begin{proof}[Proof of Theorem \ref{theorem.0.5_bisection}]
Fix an ordering of $OPT$ to be $\pi^*$ as guaranteed by 
Lemma \ref{lemma.small_weighted_ordering}. By 
Observation \ref{observation.expected_rev_gained_by_offline_bisection}, 
there exists a bisection $X^*$ satisfying
\[
R(X^*) \geq E_{P_X}[R(X)] \geq \sum_{e \in E} \frac{1}{2} w_e (n - 2y_e^{\pi^*}).
\]

\noindent Thus, by Lemma \ref{lemma.small_weighted_ordering}, 

\begin{align*}
R(X^*) &\geq \sum_{e \in E} \frac{1}{2} w_e (n - 2y_e^{\pi^*}) \\ &=
\sum_{e \in E} \frac{1}{2} w_e (n) - Y_{\pi^*} \\ &\geq
\sum_{e \in E} \frac{1}{2} w_e (n) - \sum_{e \in E} w_e \cdot \frac{|T_e|}{2} \\ &=
\frac{1}{2} \sum_{e \in E} w_e (n - |T_e|) \\ &=
\frac{1}{2}OPT.
\end{align*}

To show that the result is asymptotically tight, 
consider the instance $G = (V,E)$ on $|V| = n$ vertices which is a 
matching of $n/2$ edges, each having weight $1$. In this case, 
one solution is a binary tree in which if $\{i,j\} \in E$ 
then the LCA of $i$ and $j$ (as defined by the solution) 
is $i$ and $j$'s immediate parent. This solution generates 
$n-2$ revenue for every edge of the matching
guaranteeing that $OPT \geq \frac{n}{2}(n-2)$.

On the hand, consider an arbitrary tree of depth 2 such that its first cut is a bisection and denote it by $T$. Let $w_1$ and $w_2$ denote the total weight of edges in each set generated by the cut. Therefore, $w_1 + w_2 \leq \frac{n}{2}$. The revenue generated from $T$ is exactly $\frac{n}{2}w_1 + \frac{n}{2}w_2 \leq \frac{n^2}{4}$. Overall,
\[
R(T) \leq (\frac{1}{2}+o(1))OPT.
\]
\end{proof}

\begin{remark}
Note that Theorem 2 may be derandomized in the sense that given any binary HC tree we may produce a bisection with at least half its revenue deterministically and efficiently. Indeed, this can be done using the method of conditional expectations. We omit the (simple) details, since although this implies that given an optimal HC tree one can obtain a bisection with revenue at least OPT/2 deterministically and efficiently, this does not yield any new algorithmic consequence when an optimal HC tree is not given.
\end{remark}

\section{A 0.585 Approximation for the Revenue Goal Function}

We define the approximation algorithm as a 2 step process: first we cut all data points using some black box algorithm that produces a bisection, denoted henceforth as $\algmub$. Thereafter, we continue splitting each cluster randomly. 
Denote the combined algorithm by $\algrev$.

In order to show an approximation bound for $\algrev$, we first need to consider the $\mub$ problem. In this problem we are given a weighted graph and our goal is to create a bisection maximizing the weights of uncut edges. We note that if we restrict ourselves to revenue trees which are bisections (i.e., the first cut splits the data points into two equal sets, then each set is cut only once using a "star" subtree - see Figure \ref{figure.example_of_bisection_rev_tree}), the $\mub$ optimal solution and the revenue optimal solution are in fact the same.

\begin{figure}[H]
\centering
\begin{tikzpicture}

\draw [ultra thick] (0,0) -- (1, 1);
\node [circle,draw=black,ultra thick,minimum size=0.8cm] (d_1) at (0,0-.4) {};
\draw [ultra thick] (1,0) -- (1, 1);
\node [circle,draw=black,ultra thick,minimum size=0.8cm] (d_2) at (1,0-.4) {};
\draw [ultra thick] (2,0) -- (1,1);
\node [circle,draw=black,ultra thick,minimum size=0.8cm] (d_3) at (2,0-.4) {};
\draw [ultra thick] (1,1) -- (2.5,2);

\draw [ultra thick] (3,0) -- (4,1);
\node [circle,draw=black,ultra thick,minimum size=0.8cm] (d_4) at (3,0-.4) {};
\draw [ultra thick] (4,0) -- (4,1);
\node [circle,draw=black,ultra thick,minimum size=0.8cm] (d_5) at (4,0-.4) {};
\draw [ultra thick] (5,0) -- (4,1);
\node [circle,draw=black,ultra thick,minimum size=0.8cm] (d_6) at (5,0-.4) {};
\draw [ultra thick] (4,1) -- (2.5,2);

\end{tikzpicture}
\caption{Example of a bisection revenue tree - revenue is only gained through edges uncut by the first cut.}
\label{figure.example_of_bisection_rev_tree}
\end{figure}
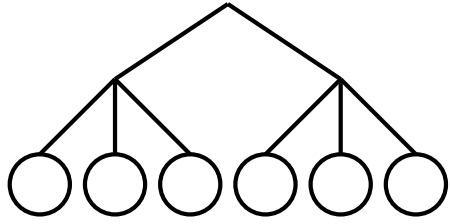

Let $p$ denote the approximation ratio algorithm $\algmub$ guarantees for the $\mub$ problem. Note that $p$ is at least $0.8776$ (see \cite{Better_Balance_by_Being_Biased_A_08776_Approximation_for_Max_Bisection}, \cite{An_improved_semidefinite_programming_hierarchies_rounding_approximation_algorithm_for_maximum_graph_bisection_problems}). Thus, by defining $\algmub$ to be such an algorithm, the following theorem shows that algorithm $\algrev$ is a 0.585 approximation for the revenue problem.

\begin{theorem}
\label{theorem.0.58_approximation}
Algorithm $\algrev$ guarantees an approximation ratio of
$\frac{2}{3}p = 0.585$ for the revenue problem, 
where $p$ is defined as the approximation ratio for the $\mub$ problem.
\end{theorem}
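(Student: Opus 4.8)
The plan is to relate the revenue obtained by $\algrev$ to $OPT$ by comparing it to the revenue of the best bisection $X^*$ guaranteed by Theorem~\ref{theorem.0.5_bisection}. First I would analyze what $\algrev$ gains on each edge. The algorithm performs a first cut that is a bisection computed by $\algmub$; for any edge $e=\{i,j\}$, if $e$ is \emph{uncut} by this first bisection then both endpoints lie in a cluster of size $n/2$, so the least common ancestor of $i,j$ has subtree size at most $n/2$, and $\algrev$ gains revenue at least $w_e \cdot n/2$ on $e$ from this cut alone. If $e$ \emph{is} cut by the first bisection it gains $0$ from that cut, but the subsequent random splitting still earns something in expectation; however, for a clean bound I expect it suffices to simply lower bound $E[R(\algrev)] \ge \tfrac{n}{2} \cdot W_{\mathrm{uncut}}$, where $W_{\mathrm{uncut}}$ is the total weight of edges uncut by the $\algmub$ bisection. (One could also add the contribution of $RAND$ on the cut edges, but this should not be needed.)

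Next I would bring in the approximation guarantee of $\algmub$. Since $\algmub$ is a $p$-approximation for \textsc{Max-Uncut Bisection}, we have $W_{\mathrm{uncut}} \ge p \cdot \mub(G)$, where $\mub(G)$ is the maximum total weight of edges uncut by any bisection of $G$. Combining with the previous step, $E[R(\algrev)] \ge \tfrac{n}{2}\, p\, \mub(G)$. Now I would connect $\mub(G)$ to $OPT$: as the paper observes (Figure~\ref{figure.example_of_bisection_rev_tree}), restricting to bisection-type revenue trees, the revenue of the bisection tree corresponding to an uncut-weight-$W$ bisection is exactly $\tfrac{n}{2}W$, so the best bisection revenue is $\max_X R(X) = \tfrac{n}{2}\,\mub(G)$. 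By Theorem~\ref{theorem.0.5_bisection} there is a bisection $X^*$ with $R(X^*)\ge \tfrac12 OPT$, hence $\tfrac{n}{2}\,\mub(G) = \max_X R(X) \ge R(X^*) \ge \tfrac12 OPT$, i.e. $\mub(G) \ge \tfrac{OPT}{n}$.

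Wait --- this only gives $E[R(\algrev)] \ge \tfrac{n}{2} p \cdot \tfrac{OPT}{n} = \tfrac{p}{2} OPT$, which is weaker than $\tfrac{2}{3}p\, OPT$. To recover the extra factor of $4/3$ I would be more careful: the revenue of $\algrev$ is \emph{not only} $\tfrac{n}{2}W_{\mathrm{uncut}}$ --- the random phase $RAND$ applied to each size-$n/2$ cluster contributes additional revenue on the uncut edges, and $RAND$ is a $1/3$-approximation (Moseley--Wang) \emph{on each cluster}. More precisely, for an uncut edge inside a cluster $C$ of size $n/2$, the random tree on $C$ earns expected revenue $w_e\big(\tfrac{n}{2} + \tfrac13(\tfrac{n}{2}-|{T'_e}|)\big)\ge w_e \cdot \tfrac13\cdot\tfrac{n}{2}\cdot 2$... the clean way is: on cluster $C$, $RAND$ gets expected revenue at least $\tfrac13$ of the optimal revenue tree \emph{on $C$}, and the optimal tree on $C$ restricted to edges inside $C$ has revenue at least... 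So the key relation I would establish is $E[R(\algrev)] \ge \tfrac{n}{2}W_{\mathrm{uncut}} + \tfrac13 \cdot (\text{intra-cluster optimal revenue})$, and then argue that running $\algmub$ plus $RAND$ this way beats the pure-bisection revenue $\tfrac{n}{2}\mub(G)$ by a factor $4/3$ --- intuitively because the bisection tree "wastes" the intra-cluster structure that $RAND$ partially recovers. The main obstacle, and the step I expect to require the most care, is exactly this: correctly accounting for the revenue gained by the random phase on the two halves and showing the resulting bound is $\tfrac43 \cdot \tfrac12 p \cdot OPT = \tfrac23 p\, OPT$; this likely uses that $RAND$ on a cluster of size $m$ earns at least $\tfrac13$ of that cluster's optimal revenue while the first cut already banks the $n/2$ (or $m$) term exactly, so the two contributions compose to give the $2/3$ rather than $1/2$. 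I would finish by plugging in $p \ge 0.8776$ to get $\tfrac23 p \ge 0.585$.
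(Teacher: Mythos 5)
Your skeleton is the paper's: bound the algorithm's revenue via the weight $W_{\mathrm{uncut}}=W_L+W_R$ of edges uncut by the first bisection, use the $\mub$ guarantee $W_{\mathrm{uncut}}\ge p(W_{L^*}+W_{R^*})=\tfrac{2p}{n}R(X^*)$, and invoke Theorem \ref{theorem.0.5_bisection} to get $R(X^*)\ge\tfrac12 OPT$. You also correctly diagnose that the naive bound $\tfrac{n}{2}W_{\mathrm{uncut}}$ only gives $\tfrac{p}{2}OPT$ and that the missing factor $\tfrac43$ must come from the random phase on the two halves. But that decisive step is exactly where your proposal breaks down. The correct fact is the \emph{per-edge} guarantee of $RAND$: on an instance with $m$ points, for every edge $e$ the random tree earns expected revenue at least $\tfrac13 m\,w_e$. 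Applied inside a cluster of size $n/2$, each uncut edge earns $\tfrac{n}{2}w_e$ from the first cut plus at least $\tfrac13\cdot\tfrac{n}{2}w_e$ from the random phase, i.e.\ $\tfrac{2n}{3}w_e$ in total, which is $\tfrac43\cdot\tfrac{n}{2}w_e$; summing over uncut edges gives $R(T_{\algrev})\ge\tfrac{2n}{3}W_{\mathrm{uncut}}$ and the chain closes to $\tfrac{2p}{3}OPT$.

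The fallback you actually propose --- that $RAND$ on a cluster $C$ earns at least $\tfrac13$ of the \emph{optimal revenue on $C$} --- does not suffice. To upgrade $\tfrac{n}{2}W_{\mathrm{uncut}}$ to $\tfrac{2n}{3}W_{\mathrm{uncut}}$ you would need the intra-cluster contribution to be at least $\tfrac{n}{6}W_{\mathrm{uncut}}$, i.e.\ $\tfrac13(OPT_L+OPT_R)\ge\tfrac{n}{6}(W_L+W_R)$, i.e.\ $OPT_L+OPT_R\ge\tfrac{n}{2}(W_L+W_R)$. Since trivially $OPT_C\le(\tfrac{n}{2}-2)W_C$, this inequality can only fail or hold with no slack; and it genuinely fails, e.g.\ when a cluster is a uniformly weighted clique, where every tree has the same revenue and $OPT_C\approx\tfrac13\cdot\tfrac{n}{2}W_C$, so $\tfrac13 OPT_C\approx\tfrac{n}{18}W_C\ll\tfrac{n}{6}W_C$. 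Your displayed per-edge formula $w_e\bigl(\tfrac{n}{2}+\tfrac13(\tfrac{n}{2}-|T'_e|)\bigr)$ is also not the right statement (the $\tfrac13$ belongs to the expectation $E\bigl[\tfrac{n}{2}-|T'_e|\bigr]\ge\tfrac13\cdot\tfrac{n}{2}$, not as a multiplier of a pointwise quantity), and the bound you derive from it, $\tfrac{n}{3}w_e$, is weaker than the $\tfrac{n}{2}w_e$ already banked. So the approach is right, but the one step that produces the $\tfrac43$ needs the per-edge form of the Moseley--Wang guarantee, not the approximation-ratio form.
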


\begin{proof}
Let $T_{\algrev}$ denote the tree generated by algorithm $\algrev$. It is a 
known simple fact that the random algorithm generates a revenue of at least $\frac{1}{3}nw_e $ for any edge $e$. Therefore, if we denote by $W_L$ and $W_R$ the weight of the uncut edges generated by the first cut of our algorithm, then,
\begin{equation*}
R(T_{\algrev}) \geq W_L(\frac{n}{2} + \frac{1}{3}\cdot \frac{n}{2}) + W_R(\frac{n}{2} + \frac{1}{3}\cdot \frac{n}{2}) = (W_L + W_R)\frac{2n}{3}.
\end{equation*}

Denote by $W_{L^*}$ and $W_{R^*}$ the weights of the uncut edges generated by the optimal $\mub$ solution and let $p$ denote our first cut's approximation with respect to the $\mub$ problem. Furthermore, let $X^*$ denote the optimal solution to the revenue problem restricted to bisections. As noted earlier, $X^*$ corresponds to $W_{L^*}$ and $W_{R^*}$. Therefore,
\[
W_L + W_R \geq p(W_{L^*} + W_{R^*}) = \frac{2p}{n}R(X^*).
\]

\noindent Let $OPT$ denote the revenue gained by the optimal solution. Thus, leveraging Theorem \ref{theorem.0.5_bisection} yields,
\begin{align*}
R(T_{\algrev}) \geq& \frac{2n}{3}(W_L + W_R) \geq \\&
\frac{4p}{3} (R(X^*)) \geq \\&
\frac{2p}{3} OPT.
\end{align*}

\noindent Since $p$ is known to be at least $0.8776$, we get that $\algrev$ is a $\frac{2}{3}p = 0.585$ approximation algorithm.
\end{proof}

\bibliographystyle{plain}
\bibliography{bib.bib}

\end{document}